\newcommand{\blue}{\textcolor{black}}
\newtheorem{theorem}{\it Theorem}
\newtheorem{corollary}{\it Corollary}
\newcommand{\bE}{\mathds{E}}
\DeclareMathOperator*{\argmax}{argmax}
\DeclareMathOperator*{\argmin}{argmin}
\begin{document}

\title{The Power of Waiting for More than One Response in Minimizing the Age-of-Information}

\author{
Yu Sang,
Bin Li,
and Bo Ji 
\thanks{
This work was supported in part by the NSF under Grants CCF-1657162, CNS-1651947, and CNS-1717108.

Yu Sang (yu.sang@temple.edu) and Bo Ji (boji@temple.edu) are with the Department of Computer and
Information Sciences, Temple University, Philadelphia, PA, and Bin Li (binli@uri.edu) is with the Department of Electrical, Computer and Biomedical Engineering,
University of Rhode Island, Kingston, Rhode Island.}
}
\maketitle

\begin{abstract}
The Age-of-Information (AoI) has recently been proposed as an important metric for investigating the timeliness performance in information-update systems. Prior studies on AoI optimization often consider a \emph{Push} model, which is concerned about when and how to ``push" (i.e., generate and transmit) the updated information to the user. In stark contrast, in this paper we introduce a new \emph{Pull} model, which is more relevant for certain applications (such as the real-time stock quotes service), where a user sends requests to the servers to proactively ``pull" the information of interest. Moreover, we propose to employ request replication to reduce the AoI. Interestingly, we find that under this new Pull model, replication schemes capture a novel tradeoff between different levels of information freshness and different response times across the servers, which can be exploited to minimize the expected AoI at the user's side. Specifically, assuming Poisson updating process at the servers and exponentially distributed response time, we derive a closed-form formula for computing the expected AoI and obtain the optimal number of responses to wait for to minimize the expected AoI. Finally, we conduct numerical simulations to elucidate our theoretical results. Our findings show that waiting for more than one response can significantly reduce the AoI in most scenarios.
\end{abstract}

\section{Introduction}\label{sec:intro}
The last decades have witnessed the prevalence of smart devices and significant advances in ubiquitous computing and the Internet of things.
This trend is forecasted to continue in the years to come \cite{vni}. 
The development of this trend has spawned a plethora of real-time services that require timely information/status updates. 
One practically important example of such services is vehicular networks and intelligent transportation systems~\cite{kaul11secon,kaul11globecom}, where accurate status information (position, speed, acceleration, tire pressure, etc.) of a vehicle needs to be shared with other nearby vehicles and road-side facilities in a timely manner in order to avoid collisions and ensure substantially improved road safety.
More such examples include sensor networks for environment/health monitoring~\cite{ko10,corke10}, wireless channel feedback~\cite{costa15}, news feeds, weather updates, online social networks, fare aggregating sites (e.g., Google Shopping), and stock quotes service. 

For systems providing such real-time services, those commonly used performance metrics, such as throughput and delay, 
exhibit significant limitations in measuring the system performance \cite{kaul12infocom}. 
Instead, \emph{the timeliness of information updates becomes a major concern}. To that end, a new metric called the \emph{Age-of-Information (AoI)} 
has been proposed as an important metric for studying the timeliness performance \cite{kaul11secon}.
The AoI is defined as the time elapsed since the most recent update occurred (see Eq.~\eqref{eq:aoi} for a formal definition).
Using the AoI metric introduced in \cite{kaul11secon} for vehicular networks, the work of \cite{kaul12infocom} 
employs a simple system model to analyze and optimize the timeliness performance of an information-update system.
This seminal work has recently aroused dramatic interests from the research community and has inspired a series
of interesting studies on the AoI analysis and optimization (see \cite{sun17update}  and references therein).

While all prior studies consider a \emph{Push} model, which is concerned about when and how to 
``push" (i.e., generate and transmit) the updated information to the user, in this paper we introduce a new 
\emph{Pull} model, under which a user sends requests to the servers to proactively ``pull" the information of interest.  
This Pull model is more relevant for many important applications where the user's interest is in the freshness 
of information at the point when the user requests it rather than in continuously monitoring the freshness of information. 
One application of the Pull model is in the real-time stock quotes service, where a customer (i.e., user) submits a query 
to multiple stock quotes providers (i.e., servers) and each provider responds with the most up-to-date information it has.

\emph{To the best of our knowledge, however, none of the existing work on the timeliness optimization has considered 
such a Pull model. In stark contrast, we focus on the Pull model and propose to employ request replication 
to minimize the expected AoI at the user's side.} Although a similar Pull model is considered for data synchronization 
in \cite{bright04,bright06}, the problems are quite different and request replication is not exploited. 
Note that the concept of replication is not new and has been extensively studied for various applications 
(e.g., cloud computing and datacenters~\cite{gardner15,ananthanarayanan12}, 
storage clouds~\cite{li16}, parallel computing~\cite{wang14,wang15}, 
and databases \cite{pacitti99,pereira10}).
\emph{However, for the AoI minimization problem under the Pull model, replication schemes exhibit a 
unique property and capture a novel tradeoff between different levels of information freshness and different 
response times across the servers. This tradeoff reveals the power of waiting for more than one response 
and can be exploited to minimize the expected AoI at the user's side.}

Next, we explain the above key tradeoff through a comparison with cloud computing systems.
It has been observed that in a cloud or a datacenter, the processing time of a same job can be highly 
variable on different servers \cite{ananthanarayanan12}.
Due to this important fact, replicating a job on multiple servers and waiting for the first finished copy can
help reduce latency \cite{ananthanarayanan12,gardner15}. Apparently, in such a system it is \emph{not} 
beneficial to wait for more copies of the job to finish, as all the copies would give the same outcome. 
In contrast, in the information-update system we consider, although the servers may possess the same 
type of information (weather forecast, stock prices, etc.), they could have different versions of the 
information with different levels of freshness due to the random updating processes.
Hence, the first response may come from a server with stale information; waiting for more than one 
response has the potential of receiving fresher information and thus helps reduce the AoI. 
\emph{Hence, it is no longer the best to stop after receiving the first response (as in the other aforementioned applications).}
On the other hand, waiting for too many responses will lead to a longer total waiting time and thus, also incurs 
a larger AoI at the user's side.
\emph{Therefore, it is challenging to determine the optimal number of responses to wait for in order to minimize 
the expected AoI at the user's side.} 

In what follows, we summarize the key contributions of this paper. \emph{First}, for the first time we introduce the Pull 
model for studying the timeliness optimization problem and propose to employ request replication to reduce the AoI.
\emph{Second}, assuming Poisson updating process at the servers and exponentially distributed response time, 
we derive a closed-form formula for computing the expected AoI and obtain the optimal number of responses to wait 
for to minimize the expected AoI. Some extensions are also discussed.
\emph{Third}, we conduct extensive numerical simulations to elucidate our theoretical results. We also investigate 
the impact of the system parameters (the updating rate, the mean response time, and the total number of servers) on the achieved 
gain in the AoI reduction. Simulation results for other types of response time distribution are also provided.
Our findings show that waiting for more than one response can significantly reduce the AoI in most scenarios.

The remainder of this paper is organized as follows. We first describe our new Pull model in Section~\ref{sec:model}. 
Then, we analyze the expected AoI under replication schemes in Section~\ref{sec:e_aoi}, obtain the
optimal number of responses for minimizing the expected AoI in Section~\ref{sec:opt}, and briefly discuss some 
extensions of our work in Section~\ref{sec:extensions}. Section~\ref{sec:simulation} presents the simulation results. 
Finally, we conclude the paper in Section~\ref{sec:conclusion}.

\section{System Model}\label{sec:model}
We consider an information-update system where a user pulls time-sensitive information from $n$ servers. 
These $n$ servers are connected to a common information source and update their data \emph{asynchronously}. 
We call such a model the \emph{Pull} model (see Fig.~\ref{fig:model}). Let $i \in \{1,2,\dots, n\}$ be the server index.
We assume that the information updating process at server $i$ is Poisson with rate $\lambda>0$ and is independent
and identically distributed (\emph{i.i.d.}) across the servers. 
This implies that the inter-update time (i.e., the time duration between two successive updates) at each server 
follows an exponential distribution with mean $1/\lambda$. Here, the inter-update time at a server can be 
interpreted as the time required for the server to receive information updates from the source.
Let $u_i(t)$ denote the time when the most recent update at server $i$ occurs, and let $\Delta_i(t)$ denote the 
AoI at server $i$, which is defined as the time elapsed since the most recent update at this server:
\begin{equation}
\label{eq:aoi}
\Delta_i(t) \triangleq t - u_i(t). 
\end{equation}
Therefore, if an update occurs at a server, then the AoI at this server drops to zero; otherwise, the AoI increases 
linearly as time goes by until the next update occurs. Fig.~\ref{fig:update} provides an illustration of the AoI evolution at server $i$.

\begin{figure}[!t]
\centering
\includegraphics[width=0.4\textwidth]{./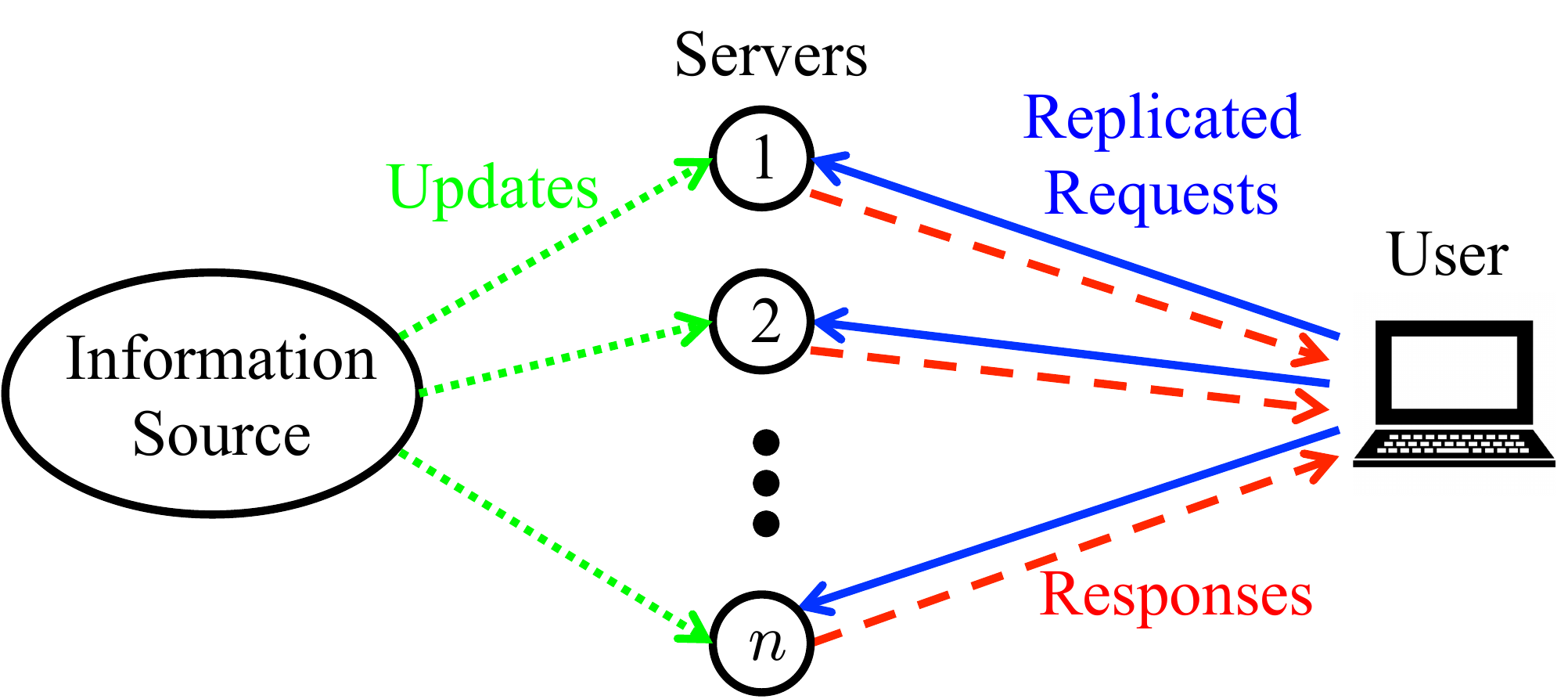}
\caption{The Pull model of information-update systems. 
\blue{Note that the arrows in the figure denote logical links rather than physical connections. 
The updates, requests, and responses are all transmitted through (wired or wireless) networks.}}
\label{fig:model}
\end{figure}

\begin{figure}[!t]
\centering
\includegraphics[width=0.35\textwidth]{./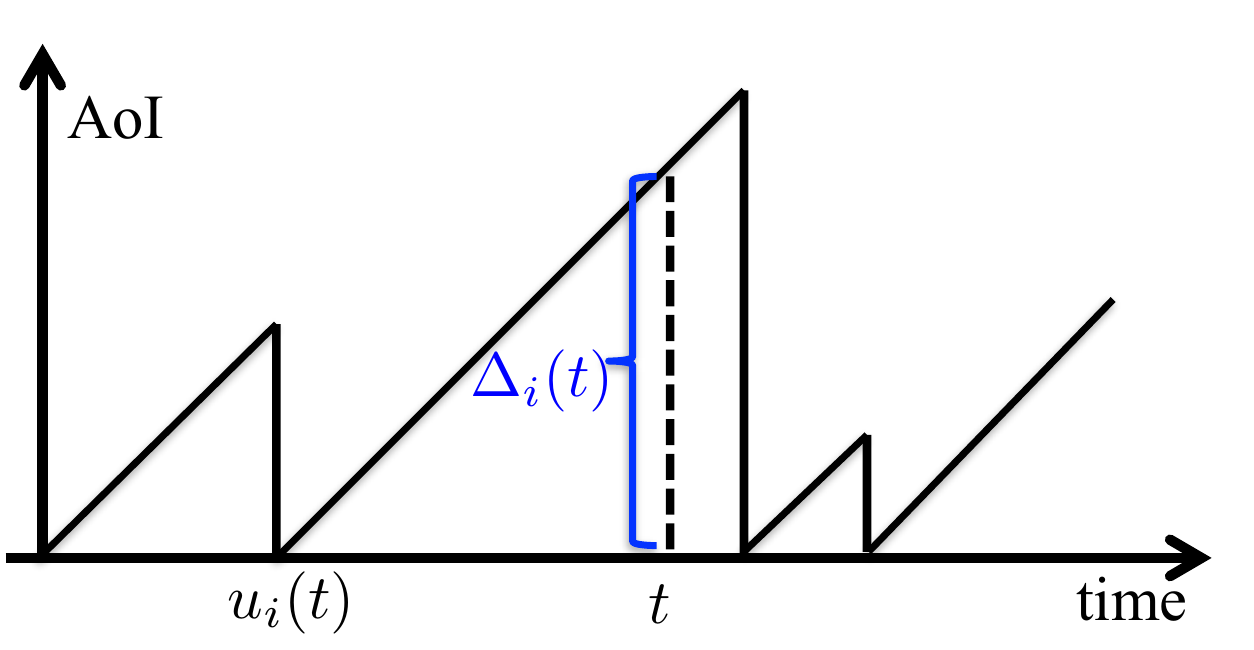}
\caption{An illustration of the AoI evolution at server $i$.}
\label{fig:update}
\end{figure}

In this work, we consider the $(n,k)$ replication scheme, under which the user sends the replicated copies of the 
request to all $n$ servers and waits for the first $k$ responses. Let $R_i$ denote the response time for server $i$.
Note that each server may have a different response time, which is the time elapsed since the request is sent out 
by the user until the user receives the response from this server. We assume that the time for the requests to reach 
the servers is negligible compared to the time for the user to download the data from the servers. Hence, the response 
time can be interpreted as the downloading time. Let $s$ denote the downloading start time, which is the same for 
all the servers, and let $f_i$ denote the downloading finish time for server $i$. Then, the response time for server $i$ 
is $R_i = f_i - s$. We assume that the response time is exponentially distributed with mean $1/\mu$ and is \emph{i.i.d.}  
across the servers.
\blue{Note that the model we consider above is simple, but it suffices to capture the key aspects and novelty of the problem we study.}

Under the $(n,k)$ replication scheme, when the user receives the first $k$ responses, it uses the freshest information 
among these $k$ responses to make certain decisions (e.g., stock trading decisions based on the received stock price 
information). Let $(j)$ denote the index of the server corresponding to the $j$-th response received by the user. 
Then, set $K=\{(1), (2), \dots, (k)\}$ contains the indices of the servers that return the first $k$ responses,
and the following is satisfied: $f_{(1)} \le f_{(2)} \le \dots \le f_{(k)}$ and $R_{(1)} \le R_{(2)} \le \dots \le R_{(k)}$.
Let server $i^*$ be the one that contains the freshest information (i.e., that has the smallest AoI) among these $k$ responses
when downloading starts at time $s$, i.e., $i^* = \argmin_{i \in K} \Delta_i(s)$ (or $i^* = \argmax_{i \in K} u_i(s)$ due to Eq.~\eqref{eq:aoi}).
Here, we are interested in the AoI at the user's side when it receives the $k$-th response, denoted by $\Delta(k)$, 
which is the time difference between when the $k$-th response is received and when the information at 
server $i^*$ is updated, i.e., 
\begin{equation}
\label{eq:aoik}
\Delta(k) \triangleq f_{(k)} - u_{i^*}(s).
\end{equation} 

Then, there are two natural questions of interest. \emph{First, for a given $k$, can one obtain a closed-form formula 
for computing the expected AoI at the user's side, $\bE[\Delta(k)]$? Second, how to determine the optimal number of responses 
to wait for, such that $\bE[\Delta(k)]$ is minimized?} The second question can be formulated as the following optimization problem:
\begin{equation}
\label{eq:opt}
\min_{k \in \{1,2,\dots, n\}}  \bE\left[\Delta(k)\right].
\end{equation} 
We will answer these two questions in the following two sections, respectively.

\section{Expected AoI}\label{sec:e_aoi}
In this section, we focus on answering the first question and derive a closed-form formula 
for computing the expected AoI at the user's side under the $(n,k)$ replication scheme. 

We begin with the definition of $\Delta(k)$ and rewrite Eq.~\eqref{eq:aoik} as follows: 
\begin{equation}
\begin{split}
\label{eq:aoik_l}
\Delta(k) =& f_{(k)} - u_{i^*}(s) \\
=& f_{(k)} - s + s -  u_{i^*}(s) \\
\stackrel{(a)} =& R_{(k)} + s - \max_{i \in K} u_i(s) \\
=& R_{(k)} + \min_{i \in K} \{s - u_i(s)\} \\
\stackrel{(b)} =& R_{(k)} + \min_{i \in K} \Delta_i(s),
\end{split}
\end{equation} 
where (a) is from the definition of $R_i$ and $i^*$ and (b) is from the definition of $\Delta_i(t)$ (i.e., Eq.~\eqref{eq:aoi}).
As can be seen from the above expression, under the $(n,k)$ replication scheme 
the AoI at the user's side consists of two terms:
(i) $R_{(k)}$, the total waiting time for receiving the first $k$ responses,  
and (ii) $\min_{i \in K} \Delta_i(s)$ (or $\Delta_{i^*}(s)$), the AoI of the freshest information 
among these $k$ responses when downloading starts at time $s$.
An illustration of these two terms and $\Delta(k)$ is shown in Fig.~\ref{fig:tradeoff}.

Taking the expectation of both sides of Eq.~\eqref{eq:aoik_l}, we have

\begin{equation}
\label{eq:e_aoi}
\bE[\Delta(k)] = \bE \left[R_{(k)} \right] + \bE \left [\min_{i \in K} \Delta_i(s) \right].
\end{equation}
Intuitively, as $k$ increases, i.e., waiting for more responses, the expected total waiting time
(i.e., the first term) increases. On the other hand, upon receiving more responses, the expected 
AoI of the freshest information among these $k$ responses (i.e., the second term) decreases. 
Hence, there is a natural tradeoff between these two terms, which is a unique property of our 
newly introduced Pull model. 

Next, we formalize this tradeoff by deriving the closed-form expressions of the above two terms
as well as the expected AoI.
We state the main result of this section in Theorem~\ref{thm:e_aoi}.

\begin{figure}[!t]
\centering
\includegraphics[width=0.4\textwidth]{./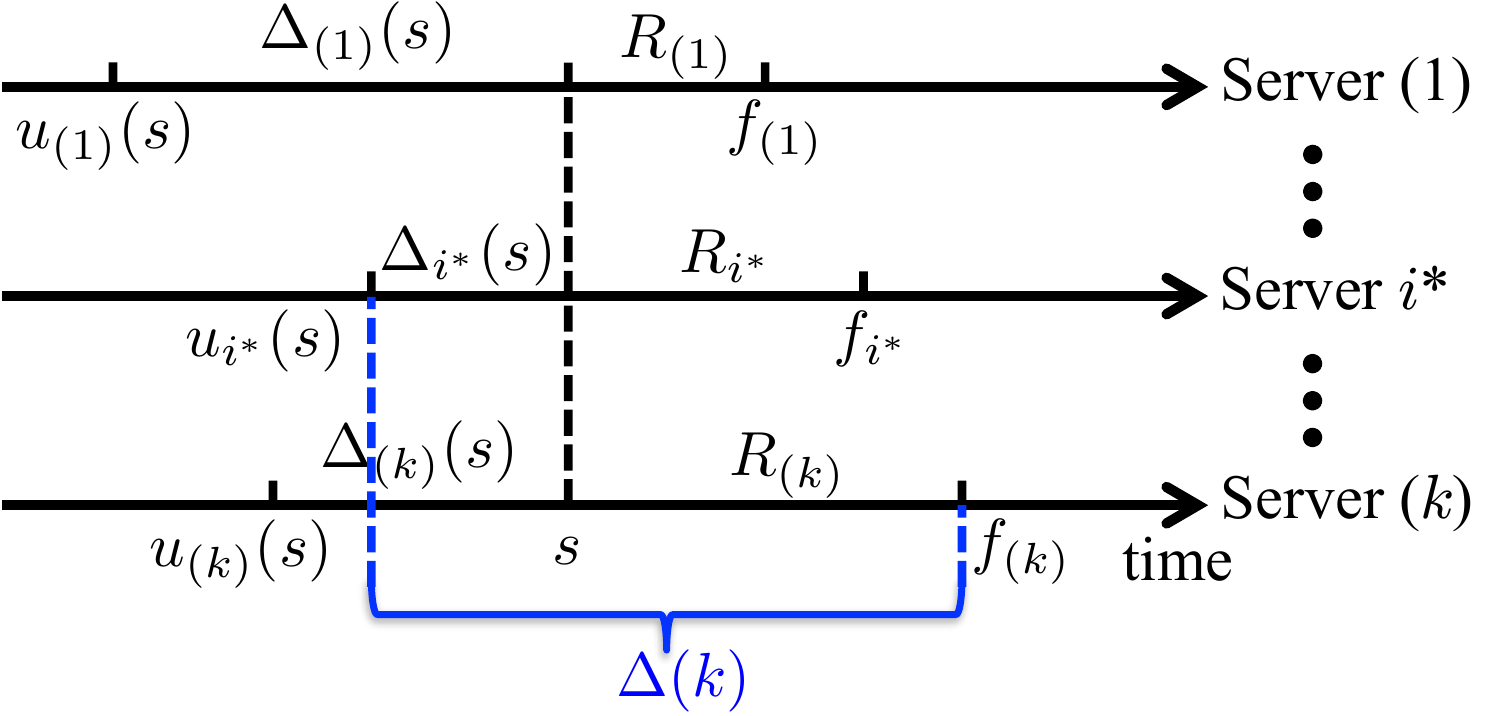}
\caption{An illustration of the AoI at the user's side and its two terms under the ($n,k$) replication scheme.}
\label{fig:tradeoff}
\end{figure}

\begin{theorem}\label{thm:e_aoi}
Under the $(n,k)$ replication scheme, the expected AoI at the user's side can be expressed as:
\begin{equation}
\label{eq:e_aoi_formula}
\bE[\Delta(k)] = \frac{1}{\mu}(\mathbf{H}(n) - \mathbf{H}(n-k)) + \frac{1}{k\lambda},
\end{equation}
where $\mathbf{H}(n) = \sum_{l=1}^n \frac{1}{l}$ is the $n$-th partial sum of the diverging harmonic series.
\end{theorem}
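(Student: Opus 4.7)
The plan is to prove the formula by computing each of the two terms in the decomposition already established in Eq.~\eqref{eq:e_aoi}, namely $\bE[R_{(k)}]$ and $\bE[\min_{i \in K} \Delta_i(s)]$, and then adding them. The first term depends only on the order statistics of the response times; the second depends on the stationary age distribution at a Poisson-updated server.

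For the first term, I would observe that $R_{(k)}$ is the $k$-th order statistic of $n$ i.i.d.\ Exponential$(\mu)$ random variables. A standard argument using the memorylessness of the exponential distribution (equivalently, the spacings representation of exponential order statistics) gives
\begin{equation*}
\bE[R_{(k)}] = \frac{1}{\mu}\sum_{j=1}^{k} \frac{1}{n-j+1} = \frac{1}{\mu}\bigl(\mathbf{H}(n) - \mathbf{H}(n-k)\bigr),
\end{equation*}
since the gap between the $(j{-}1)$-th and $j$-th arrivals among $n-j+1$ still-running exponentials is Exponential with rate $(n-j+1)\mu$.

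For the second term, the key observation is that the indexing set $K$ is determined solely by the response times $\{R_i\}$, which are independent of the update processes at the servers. Hence, conditional on $K$, the AoIs $\{\Delta_i(s)\}_{i \in K}$ are i.i.d.\ copies of the stationary age of a Poisson$(\lambda)$ process. By memorylessness of the exponential inter-update time, this stationary age is itself Exponential$(\lambda)$. Therefore $\min_{i \in K} \Delta_i(s)$ is the minimum of $k$ i.i.d.\ Exponential$(\lambda)$ variables, so it is Exponential$(k\lambda)$ and
\begin{equation*}
\bE\Bigl[\min_{i \in K} \Delta_i(s)\Bigr] = \frac{1}{k\lambda}.
\end{equation*}
Adding the two terms yields the claimed closed form.

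The main obstacle is the justification for the second term: one must argue carefully that (i) $K$ and $\{\Delta_i(s)\}_{i=1}^n$ are independent, and (ii) each $\Delta_i(s)$ has an Exponential$(\lambda)$ marginal. Point (i) follows from the modeling assumption that update processes and response times are independent. Point (ii) is more subtle because, strictly speaking, $\Delta_i(s) = s - u_i(s)$ is the backward recurrence time of a Poisson process at time $s$, which is truly Exponential$(\lambda)$ only in the stationary regime (for any finite $s$ it is actually the minimum of $s$ and an Exponential$(\lambda)$); this detail should be handled by either assuming the system has been running since $-\infty$ or taking a steady-state analysis, which is the standard convention in the AoI literature and which I would state explicitly at the start of the proof.
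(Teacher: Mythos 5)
Your proof is correct and takes essentially the same approach as the paper's: the spacings representation of exponential order statistics gives $\bE[R_{(k)}] = \frac{1}{\mu}(\mathbf{H}(n)-\mathbf{H}(n-k))$, and the memoryless property makes each $\Delta_i(s)$ Exponential$(\lambda)$ so that the minimum over the $k$ responding servers is Exponential$(k\lambda)$. The only difference is that you explicitly flag the independence of $K$ from the update processes and the stationary-regime caveat for the backward recurrence time, both of which the paper treats implicitly; this added care strengthens rather than changes the argument.
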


\begin{proof}
\blue{We first analyze the the first term of the right-hand side of Eq.~\eqref{eq:e_aoi} and want to show 
$\bE[R_{(k)}] = \frac{1}{\mu}(\mathbf{H}(n) - \mathbf{H}(n-k))$.}
Note that the response time is exponentially distributed with mean $1/\mu$ and is \emph{i.i.d.} across the servers.
Hence, random variable $R_{(k)}$ is the $k$-th smallest value of $n$ \emph{i.i.d.} exponential random variables with mean $1/\mu$.
The order statistics results of exponential random variables give that $R_{(1)}$ is an exponential random variable with mean $\frac{1}{n\mu}$
and that $(R_{(j)} - R_{(j-1)})$ is an exponential random variable with mean $\frac{1}{(n+1-j)\mu}$ for any $j \in \{2,3,\dots,n\}$~\cite{firstcourse}.
Hence, we have the following:
\begin{align}
\bE \left[R_{(k)} \right] 
&= \bE \left[ R_{(1)} + \sum_{j=2}^{k} (R_{(j)} - R_{(j-1)}) \right]\notag\\
&= \bE[R_{(1)}] + \sum_{j=2}^{k} \bE \left[R_{(j)} - R_{(j-1)}\right] \notag\\
&= \sum_{j=1}^{k} \frac{1}{(n+1-j)\mu} \notag\\
&= \frac{1}{\mu}(\mathbf{H}(n) - \mathbf{H}(n-k)).\label{eq:e_R_k}
\end{align}

Next, we analyze the second term of the right-hand side of Eq.~\eqref{eq:e_aoi} and want to show the following:
\begin{equation}
\label{eq:e_Delta_i}
\bE \left [\min_{i \in K} \Delta_i(s) \right] = \frac{1}{k\lambda}.
\end{equation}
Note that the updating process at each server is a Poisson process with rate $\lambda$ and is \emph{i.i.d.} across the servers.
Hence, the inter-update time for each server is exponentially distributed with mean $1/\lambda$. Due to the
memoryless property of the exponential distribution, the AoI at each server has the same distribution as the 
inter-update time, i.e., random variable $\Delta_i(s)$ is also exponentially distributed with mean $1/\lambda$ 
and is \emph{i.i.d.} across the servers \cite{nelson13}. Therefore, random variable $\min_{i \in K} \Delta_i(s)$
is the minimum of $k$ \emph{i.i.d.} exponential random variables with mean $1/\lambda$, which is also exponentially
distributed with mean $\frac{1}{k\lambda}$. This implies Eq.~\eqref{eq:e_Delta_i}.

Combining Eqs.~\eqref{eq:e_R_k} and \eqref{eq:e_Delta_i}, we complete the proof.
\end{proof}

\emph{Remark.}
The above analysis indeed agrees with our intuition: while the expected total waiting time for receiving the first $k$ responses (i.e., Eq.~\eqref{eq:e_R_k}) 
is a monotonically increasing function of $k$, the expected AoI of the freshest information among these $k$ responses 
(i.e., Eq.~\eqref{eq:e_Delta_i}) is a monotonically decreasing function of $k$.

\section{Optimal Replication Scheme}\label{sec:opt}
In this section, we will exploit the aforementioned tradeoff and focus on answering the second question 
we discussed at the end of Section~\ref{sec:model}.
Specifically, we aim to find the optimal number of responses to wait for in order to minimize the expected 
AoI at the user's side.

Using the analytical result of Theorem~\ref{thm:e_aoi}, we rewrite the optimization problem in Eq.~\eqref{eq:opt} as:
\begin{equation}
\label{eq:opt2}
 \min_{k \in \{1,2,\dots, n\}}   \bE [\Delta(k)] = \frac{1}{\mu}(\mathbf{H}(n) - \mathbf{H}(n-k)) + \frac{1}{k\lambda}.
\end{equation}
Let $k^*$ be an optimal solution to Eq.~\eqref{eq:opt2}.
We state the main result of this section in Theorem~\ref{thm:opt}.

\begin{theorem}
\label{thm:opt} 
An optimal solution $k^*$ can be computed as:
\begin{equation}
k^* = \min \left \{ \left \lceil \frac{2\mu n}{\sqrt{(\lambda + \mu)^2+4\lambda\mu n}+\lambda+\mu} \right \rceil, n  \right \}.
\end{equation}
\end{theorem}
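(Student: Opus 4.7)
Let $g(k) \triangleq \bE[\Delta(k)] = \frac{1}{\mu}(\mathbf{H}(n)-\mathbf{H}(n-k)) + \frac{1}{k\lambda}$. Because $k$ is integer-valued, I would avoid calculus and instead work directly with the forward difference $\Delta g(k) \triangleq g(k+1) - g(k)$ for $k \in \{1,\dots,n-1\}$. Using the telescoping form $\mathbf{H}(n)-\mathbf{H}(n-k) = \sum_{j=1}^{k}\frac{1}{n+1-j}$ already established in the proof of Theorem~\ref{thm:e_aoi}, a short computation gives
\[
\Delta g(k) \;=\; \frac{1}{\mu(n-k)} \;-\; \frac{1}{\lambda k(k+1)}.
\]

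The sign of $\Delta g(k)$ is governed by the quadratic $\phi(k) \triangleq \lambda k(k+1) - \mu(n-k) = \lambda k^{2} + (\lambda+\mu)k - \mu n$: specifically, $\Delta g(k) \geq 0$ iff $\phi(k) \geq 0$. Next I would observe that $\phi$ is strictly increasing on $k \geq 0$ (its vertex lies at $-(\lambda+\mu)/(2\lambda) < 0$) and that $\phi(0) = -\mu n < 0$, so $\phi$ has a unique positive root
\[
\tilde{k} \;=\; \frac{-(\lambda+\mu) + \sqrt{(\lambda+\mu)^{2} + 4\lambda\mu n}}{2\lambda}.
\]
Consequently $g$ is strictly decreasing for integer $k < \tilde{k}$ and non-decreasing for $k \geq \tilde{k}$, so the integer minimizer over $\{1,\dots,n\}$ is $\min\{\lceil \tilde{k}\,\rceil, n\}$.

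Finally I would rewrite $\tilde{k}$ into the form stated in the theorem by multiplying numerator and denominator by the conjugate $\sqrt{(\lambda+\mu)^{2}+4\lambda\mu n}+(\lambda+\mu)$, which collapses the numerator to $4\lambda\mu n$ and yields
\[
\tilde{k} \;=\; \frac{2\mu n}{\sqrt{(\lambda+\mu)^{2}+4\lambda\mu n}+\lambda+\mu}.
\]
This conjugate-rationalization step is what makes the stated closed form appear natural; without it one only sees the quadratic-formula version.

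\textbf{Where I expect friction.} The algebra is routine, so the only delicate point is the integer argument. I need to confirm that $\phi(k)$ changes sign exactly once on $\{1,\dots,n-1\}$ (which follows from monotonicity of $\phi$) and to handle the boundary case $\tilde{k} \geq n$ cleanly, which motivates the outer $\min\{\cdot,n\}$. A brief check shows $\tilde{k} < n$ always holds (the inequality $\mu-\lambda < \sqrt{(\lambda+\mu)^{2}+4\lambda\mu n}$ is immediate after squaring), so the cap is only a safety net, but I would still state it explicitly to keep the formula well-defined in degenerate regimes. The rest is essentially bookkeeping.
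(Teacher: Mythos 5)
Your proposal is correct and follows essentially the same route as the paper: both compute the forward difference $D(k)=\frac{1}{\mu(n-k)}-\frac{1}{\lambda k(k+1)}$, establish a single sign change (you via monotonicity of the quadratic $\phi$, the paper via monotonicity of $D$ itself), and conclude $k^*=\min\{\lceil k'\rceil,n\}$. Your extra observation that $\tilde{k}<n$ always holds (so the cap is only a safety net) is a nice touch the paper omits, but it does not change the argument.
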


\begin{proof}
We first define $D(k)$ as the difference of the expected AoI between the $(n,k+1)$ and $(n,k)$ 
replication schemes, i.e., $D(k) \triangleq \Delta(k+1) - \Delta(k)$ for any $k \in \{1,2,\dots,n-1\}$.
From Eq.~\eqref{eq:e_aoi_formula}, we have that for any $k \in \{1,2,\dots,n-1\}$,
\begin{equation}\label{eq:diff}
D(k) = \frac{1}{(n-k)\mu} - \frac{1}{k(k+1)\lambda}.
\end{equation}
It is easy to see that $D(k)$ is a monotonically increasing function of $k$.

We now extend the domain of $D(k)$ to the set of positive real numbers and want to find $k^{\prime}$
such that $D(k^{\prime}) = 0$. With some standard calculations and dropping the negative solution, 
we derive the following:
\begin{equation}
k^{\prime} = \frac{2\mu n}{\sqrt{(\lambda + \mu)^2+4\lambda\mu n}+\lambda+\mu}.
\end{equation}
Next, we discuss two cases: (i) $k^{\prime}>n$ and (ii) $0 < k^{\prime} \le n$.

In Case (i), we have $k^{\prime}>n$. This implies that $D(k)=\Delta_i(k+1)-\Delta_i(k)<0$ for all $k \in \{1,2,\dots,n\}$ due to the fact that
$D(k)$ is monotonically increasing. Hence, the expected AoI $\Delta(k)$ is a monotonically decreasing function for $k \in \{1,2,\dots,n\}$.
Therefore, $k^*=n$ must be an optimal solution.

In Case (ii), we have $0< k^{\prime} \le n$. We consider two subcases: $k^{\prime}$ is an integer in $\{1,2,\dots,n\}$ and $k^{\prime}$ is not an integer.

If $k^{\prime}$ is an integer in $\{1,2,\dots,n\}$, we have $D(k)=\Delta(k+1)-\Delta(k)<0$ for $k \in \{1,2,\dots,k^{\prime}-1\}$ and $D(k)=\Delta(k+1)-\Delta(k)>0$ 
for $k\in\{k^{\prime}+1,\dots,n\}$.
Hence, the expected AoI $\Delta(k)$ is first decreasing (for $k \in \{1,2,\dots,k^{\prime}-1\}$) and then increasing (for $k\in\{k^{\prime}+1,\dots,n\}$).
Therefore, there are two optimal solutions: $k^*=k^{\prime}$ and $k^*=k^{\prime}+1$ since $\Delta(k^{\prime}+1) = \Delta(k^{\prime})$ (due to $D(k^{\prime})=0$).

If $k^{\prime}$ is not an integer, we have $D(k)=\Delta(k+1)-\Delta(k)<0$ for $k \in \{1,2,\dots,\lfloor k^{\prime} \rfloor\}$ and $D(k)=\Delta(k+1)-\Delta(k)>0$ 
for $k\in\{\lceil k^{\prime} \rceil,\dots,n\}$. 
Hence, the expected AoI $\Delta(k)$ is first decreasing (for $k \in \{1,2,\dots,\lfloor k^{\prime} \rfloor \}$) and then increasing (for $k\in\{\lceil k^{\prime} \rceil,\dots,n\}$).
Therefore, $k^* = \lceil k^{\prime} \rceil$ must be an optimal solution.

Combining two subcases, we have $k^* = \lceil k^{\prime} \rceil$ in Case (ii). Then, combining Cases (i) and (ii), we have 
$k^* = \min \{\lceil k^{\prime} \rceil, n\} = \min \left \{ \left \lceil \frac{2\mu n}{\sqrt{(\lambda + \mu)^2+4\lambda\mu n}+\lambda+\mu} \right \rceil, n  \right \}$.
\end{proof}

\emph{Remark.} There are two special cases that are of particular interest: 
waiting for the first response only (i.e., $k^*=1$) and waiting for all the responses (i.e., $k^*=n$).
In Corollary~\ref{cor:specialcase}, we provide a sufficient and necessary condition for each of these two special cases.

\begin{corollary}
\label{cor:specialcase}
(i) $k^*=1$ is an optimal solution if and only if $\lambda \geq \frac{\mu(n-1)}{2}$;
(ii) $k^*=n$ is an optimal solution if and only if $\lambda \leq \frac{\mu}{n(n-1)}$.
\end{corollary}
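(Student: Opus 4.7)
The plan is to reduce both statements to evaluating the difference function $D(k)$ from the proof of Theorem~\ref{thm:opt} at a single boundary point, exploiting its strict monotonicity in $k$. Recall that $D(k) = \Delta(k+1) - \Delta(k) = \frac{1}{(n-k)\mu} - \frac{1}{k(k+1)\lambda}$ is strictly increasing in $k$. Thus $k=1$ is an optimal solution of \eqref{eq:opt2} if and only if $\Delta(k)$ is nondecreasing on $\{1,\dots,n\}$, which (by monotonicity of $D$) is equivalent to $D(1) \ge 0$; and $k=n$ is an optimal solution if and only if $\Delta(k)$ is nonincreasing on $\{1,\dots,n\}$, which is equivalent to $D(n-1) \le 0$. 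So the whole proof collapses to solving two elementary inequalities.

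For part (i), I would set $D(1) = \frac{1}{(n-1)\mu} - \frac{1}{2\lambda} \ge 0$ and rearrange to obtain $\lambda \ge \frac{\mu(n-1)}{2}$. The ``if'' direction is immediate from the monotonicity argument: $D(1) \ge 0$ forces $D(k) \ge 0$ for all $k \ge 1$, so $\Delta(k)$ is nondecreasing and $k^* = 1$ is optimal. For the ``only if'' direction, I would argue contrapositively: if $\lambda < \frac{\mu(n-1)}{2}$, then $D(1) < 0$, i.e.\ $\Delta(2) < \Delta(1)$, so $k=1$ cannot be optimal.

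For part (ii), I would analogously set $D(n-1) = \frac{1}{\mu} - \frac{1}{n(n-1)\lambda} \le 0$ and rearrange to obtain $\lambda \le \frac{\mu}{n(n-1)}$. The ``if'' direction follows because $D(n-1) \le 0$ combined with monotonicity gives $D(k) \le 0$ for all $k \le n-1$, so $\Delta$ is nonincreasing and $k^* = n$ is optimal. The ``only if'' direction again uses the contrapositive: if $\lambda > \frac{\mu}{n(n-1)}$, then $D(n-1) > 0$, so $\Delta(n) > \Delta(n-1)$ and $k=n$ is not optimal.

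There is no real obstacle here; the only subtle point is remembering that the corollary asserts $k^*$ \emph{is an} optimal solution (not necessarily unique), so the boundary cases $D(1) = 0$ and $D(n-1) = 0$ must be allowed to yield equality in the stated thresholds, which is why the inequalities are non-strict. As an alternative, one could plug the thresholds into the closed-form expression for $k^*$ in Theorem~\ref{thm:opt} and verify that the ceiling evaluates to $1$ (resp.\ $n$) exactly in the stated regime, but working directly with $D$ is cleaner and avoids manipulating the radical.
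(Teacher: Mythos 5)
Your proposal is correct and follows essentially the same route as the paper: both reduce the claim to the boundary conditions $D(1) \ge 0$ and $D(n-1) \le 0$ via the monotonicity of $D(k)$ and then solve the resulting elementary inequalities. Your write-up merely spells out the ``if and only if'' directions that the paper dismisses with ``a little thought.''
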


\begin{proof}
The proof follows straightforwardly from Theorem~\ref{thm:opt}. 
A little thought gives the following: $k^*=1$ is an optimal solution if and only if $D(1) \ge 0$. 
Solving $D(1) = \frac{1}{(n-1)\mu} - \frac{1}{2\lambda} \ge 0$ gives $\lambda \geq \frac{\mu(n-1)}{2}$.
Similarly, $k^*=n$ is an optimal solution if and only if $D(n-1) \le 0$. 
Solving $D(n-1) = \frac{1}{\mu} - \frac{1}{n(n-1)\lambda} \le 0$ gives $\lambda \leq \frac{\mu}{n(n-1)}$.
\end{proof}

\emph{Remark.} The above results agree well with the intuition.
For a given number of servers, if the mean inter-update time is much smaller than the mean response 
time (i.e., $\lambda \gg \mu$), then all the servers have frequent updates and thus, the difference of the 
freshness levels among the servers is small. In this case, it is not beneficial to wait for more responses.
On the other hand, if the mean inter-update time is much larger than the mean response time (i.e., $\lambda \ll \mu$), 
then one server may possess fresher information than another server. In this case, it is worth waiting for 
more responses, which leads to a significant gain in the AoI reduction.

\begin{figure*}[t]
\centering
\begin{subfigure}[b]{0.31\linewidth}
	\centering
	\includegraphics[width=0.9\textwidth]{./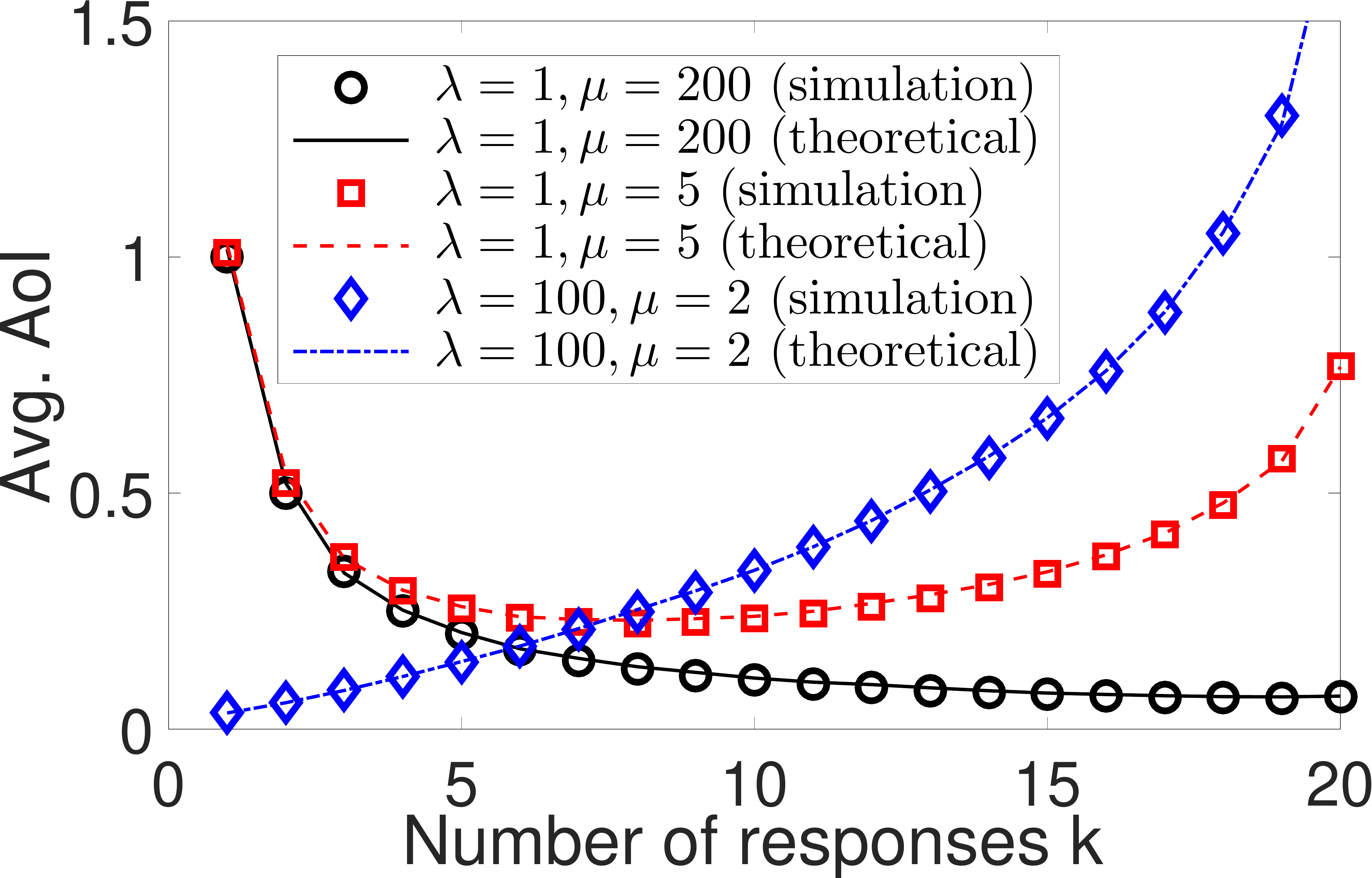}
	\caption{Exponential response time}
	\label{fig:exp}
\end{subfigure}
\begin{subfigure}[b]{0.31\linewidth}
	\centering
	\includegraphics[width=0.9\textwidth]{./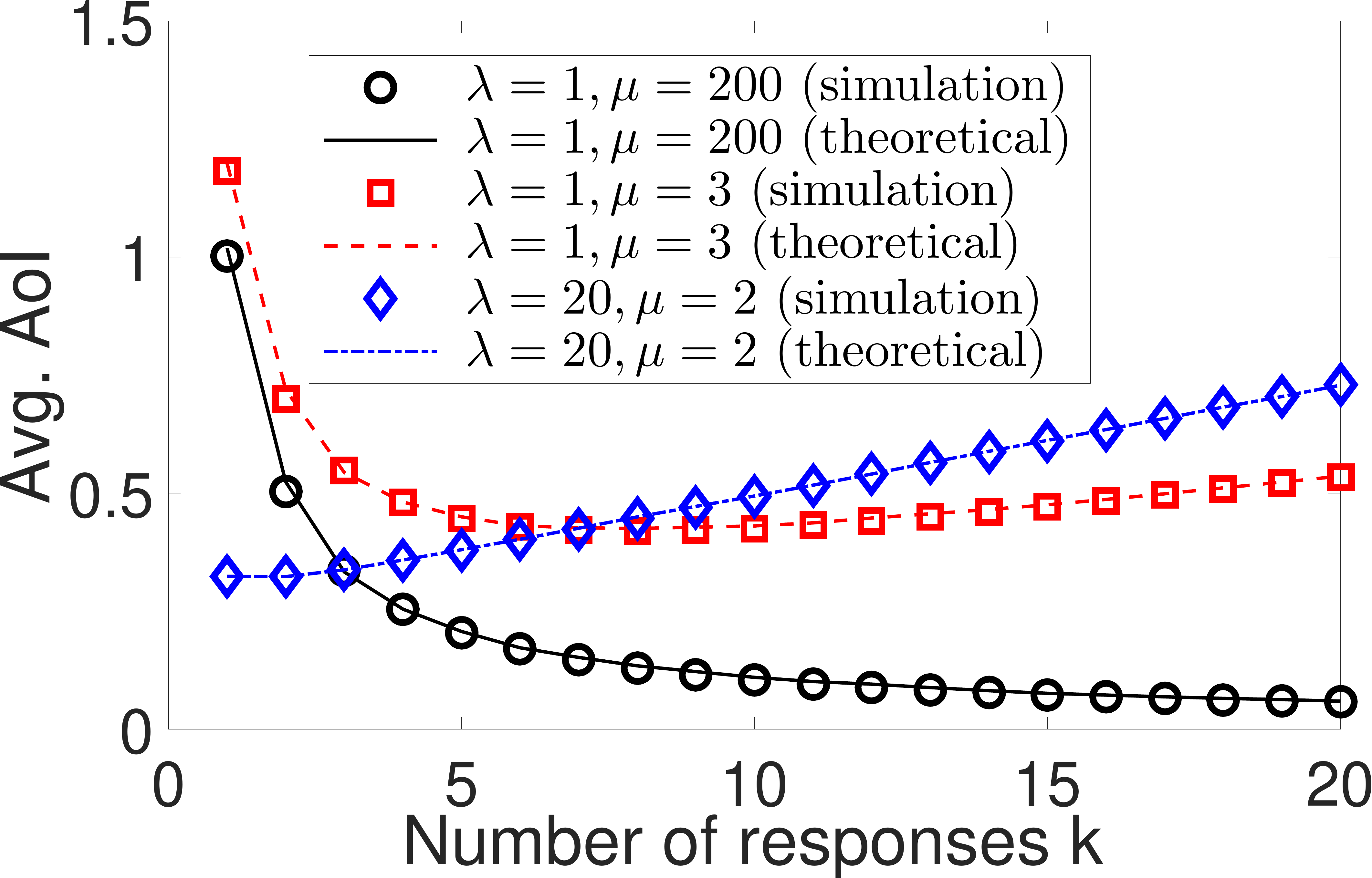}
	\caption{Uniform response time}
	\label{fig:uni}
\end{subfigure}
\begin{subfigure}[b]{0.33\linewidth}
	\centering
	\includegraphics[width=0.9\textwidth]{./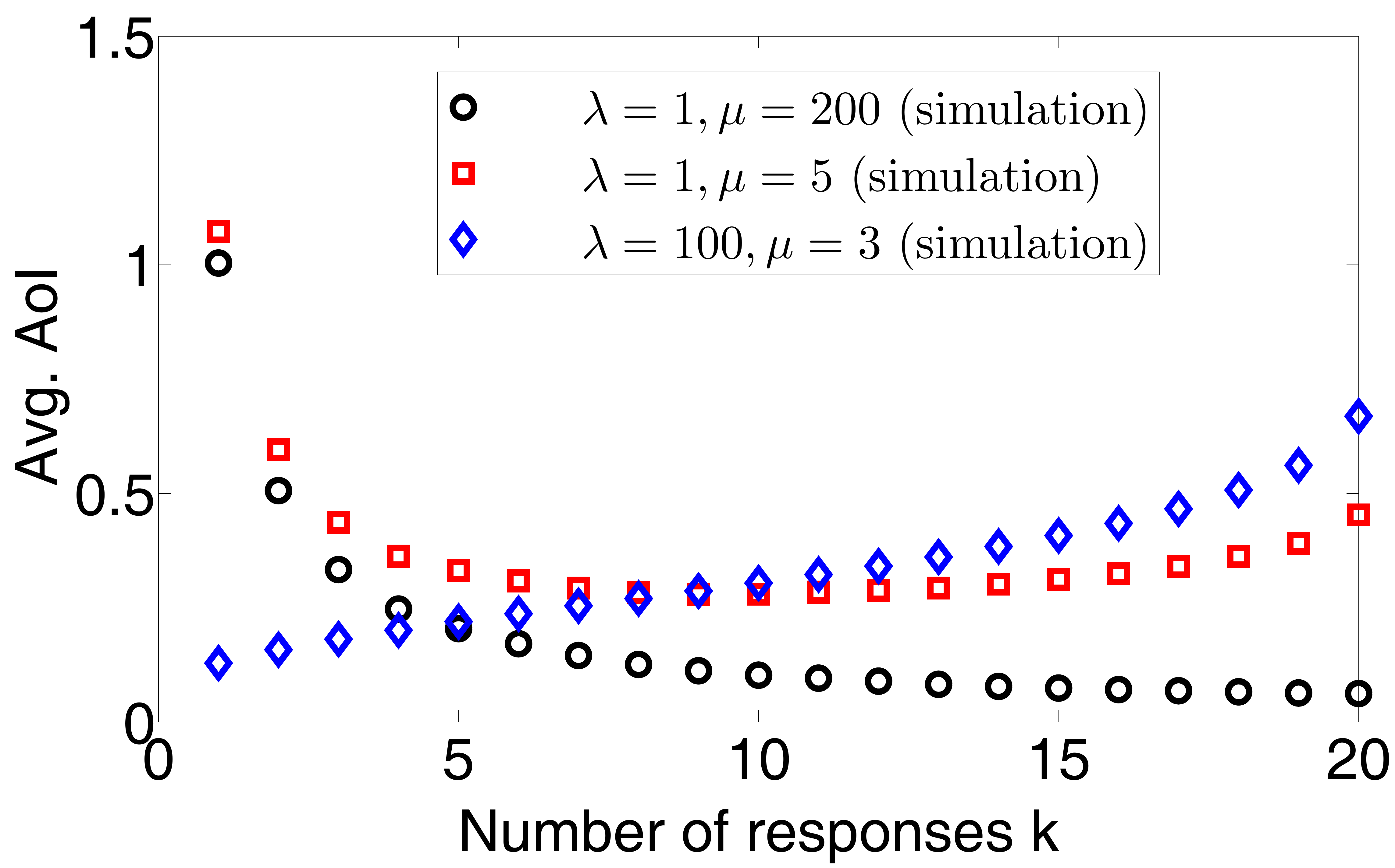}
	\caption{Gamma response time}
	\label{fig:gamma}
\end{subfigure}
\caption{Simulation results of average AoI vs. the number of responses $k$ for three different types of response time distributions.}
\label{fig:sim}
\end{figure*}

\section{Extensions}\label{sec:extensions}
In this section, we discuss some extensions of our work. 

\textbf{Replication scheme.}
So far, we have only considered the $(n,k)$ replication scheme. 
\blue{One limitation of this scheme is that it requires the user to send a replicated request to every 
server, which may incur a large overhead when there are a large number of servers (i.e., when $n$ is large). }
Instead, a more practical scheme would be to send the replicated requests to a subset of servers.
Hence, we consider the $(n,m,k)$ replication schemes, under which the user sends a replicated 
request to each of the $m$ servers that are randomly and uniformly chosen from the $n$ servers, 
and waits for the first $k$ responses, where $m \in \{1,2,\dots,n\}$ and $k \in \{1,2,\dots,m\}$. 
Making the same assumptions as in Section~\ref{sec:model}, we can derive the expected AoI at 
the user's side in a similar manner. Specifically, reusing the proof of Theorem~\ref{thm:e_aoi} and
replacing $n$ with $m$ in the proof, we can show the following:
\begin{equation}
\bE[\Delta(k)] = \frac{1}{\mu}(\mathbf{H}(m) - \mathbf{H}(m-k)) + \frac{1}{k\lambda}.
\end{equation}

\textbf{Uniformly distributed response time.}
Note that our current analysis requires the memoryless property of the Poisson updating process. 
However, the analysis can be extended to the uniformly distributed response time. We make the 
same assumptions as in Section~\ref{sec:model}, except that the response time is now uniformly 
distributed in the range of $[a,a+h]$ with $a \ge 0$ and $h \ge 0$.
In this case, we have $\bE[R_{(k)}] = \frac{kh}{n+1} + a$ \cite{firstcourse}.
Since Eq.~\eqref{eq:e_Delta_i} still holds, from Eq.~\eqref{eq:e_aoi} we have
\begin{equation}
\label{eq:e_aoi_uniform}
\bE[\Delta(k)] = \frac{kh}{n+1} + a + \frac{1}{k\lambda}.
\end{equation}

Following a similar line of analysis to that in the proof of Theorem~\ref{thm:opt},
we can show that an optimal solution $k^*$ can be computed as:
\begin{equation}
k^* =\min \left\{ \left\lceil \frac{2(n+1)}{\sqrt{h^2\lambda^2+4h\lambda(n+1)} + h\lambda} \right \rceil, n \right\}.
\end{equation}

\section{Numerical Results}\label{sec:simulation}
In this section, we perform extensive simulations to evaluate the AoI performance in an information-update system with $20$ servers under the $(n,k)$ replication scheme. 
We first describe our simulation settings.
Throughout the simulations, the updating process at each server is assumed to be Poisson with rate $\lambda$ and is \emph{i.i.d.} across the servers. The user's request for the information is generated at time $s$, which is uniformly selected from the time interval $[0, T]$, where we set  $T = 10^6/\lambda$. This implies that each server has a total of $10^6$ updates on average. 
Next, we evaluate the AoI performance through simulations for three types of response time distribution: 
\emph{exponential}, \emph{uniform}, and \emph{Gamma}.
First, we assume that the response time is exponentially distributed with mean $1/\mu$.
Fig.~\ref{fig:exp} presents how the average AoI changes as the number of responses $k$ varies
in three representative setups, where each point represents an average of $10^3$ simulation runs.
We also include plots of our theoretical results (i.e., Eq.~\eqref{eq:e_aoi_formula}) for comparison.
A crucial observation from Fig.~\ref{fig:exp} is that the simulation results match perfectly with our theoretical results. 
In addition, we observe three different behaviors of the average AoI performance:
(i) If the inter-update time is much smaller than the response time (i.e., $\lambda = 100$, $\mu = 2$), 
then the average AoI increases as $k$ increases and thus, it is not beneficial to wait for more than one response. 
(ii) In contrast, if the inter-update time is much larger than the response time (i.e., $\lambda = 1$, $\mu = 200$), 
then the average AoI decreases as $k$ increases and thus, it is worth waiting for all the responses so as to achieve a smaller average AoI. 
(iii) When the inter-update time is comparable to the response time (i.e., $\lambda=1$, $\mu=5$),
then as $k$ increases, the AoI would first decrease and then increase. 
On the one hand, when $k$ is small, the freshness of the data at the servers dominates and thus, waiting for more responses helps reduce the average AoI. 
On the other hand, when $k$ becomes large, the total waiting time becomes dominant and thus, the average AoI increases as $k$ further increases.

In Section~\ref{sec:extensions}, we discussed the extension of our theoretical results to the case of uniformly distributed response time.
Hence, we also perform simulations for the response time uniformly distributed in the range of $[\frac{1}{2\mu}, \frac{3}{2\mu}]$ with mean $1/\mu$. 
Fig.~\ref{fig:uni} presents the average AoI as the number of responses $k$ changes. In this scenario, the simulation results also
match perfectly with the theoretical results (i.e., Eq.~\eqref{eq:e_aoi_uniform}). 
Also, we observe a very similar phenomenon to that in Fig.~\ref{fig:exp} on how the average AoI varies as $k$ increases in three 
different simulation setups. 

\begin{figure*}[t!]
	\centering
	\begin{subfigure}[b]{0.3\linewidth}
		\includegraphics[width=0.9\textwidth]{./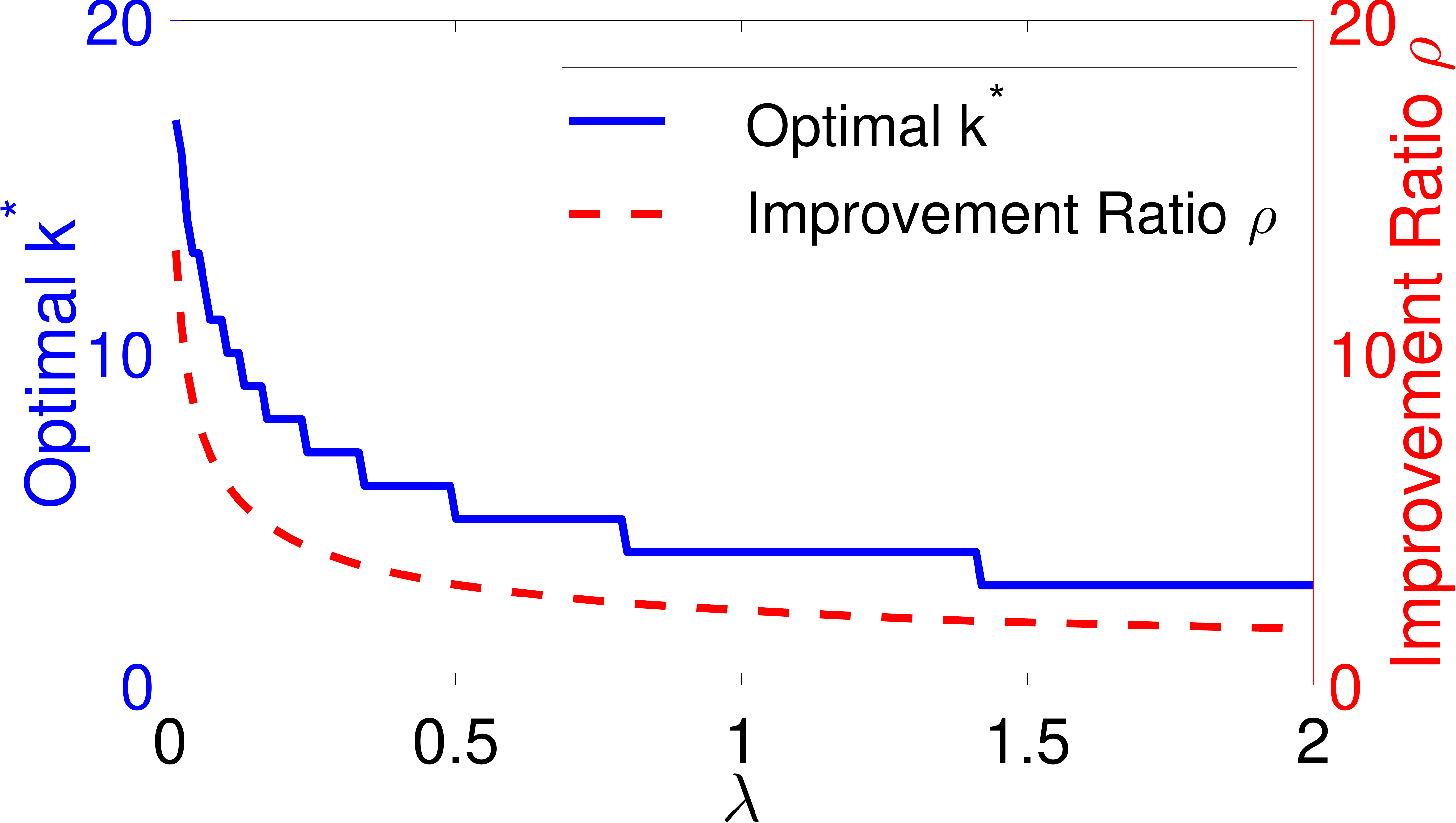}
		\caption{Impact of updating rate $\lambda$.}
		\label{subfig:l}
	\end{subfigure}
	\begin{subfigure}[b]{0.3\linewidth}
		\includegraphics[width=0.9\textwidth]{./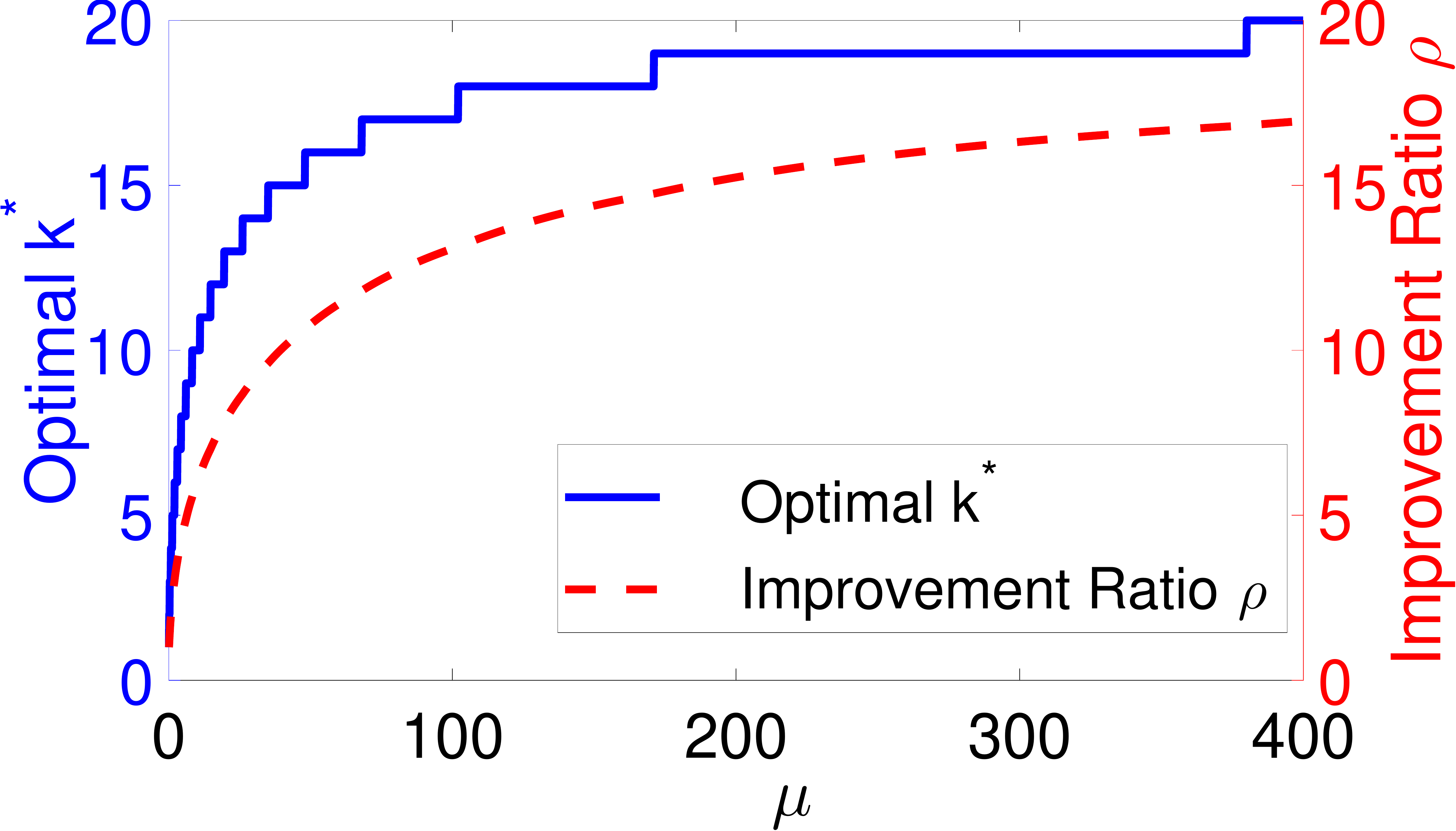}
		\caption{Impact of mean response time $1/\mu$.}
		\label{subfig:m}
	\end{subfigure}
	\begin{subfigure}[b]{0.3\linewidth}
		\includegraphics[width=0.9\textwidth]{./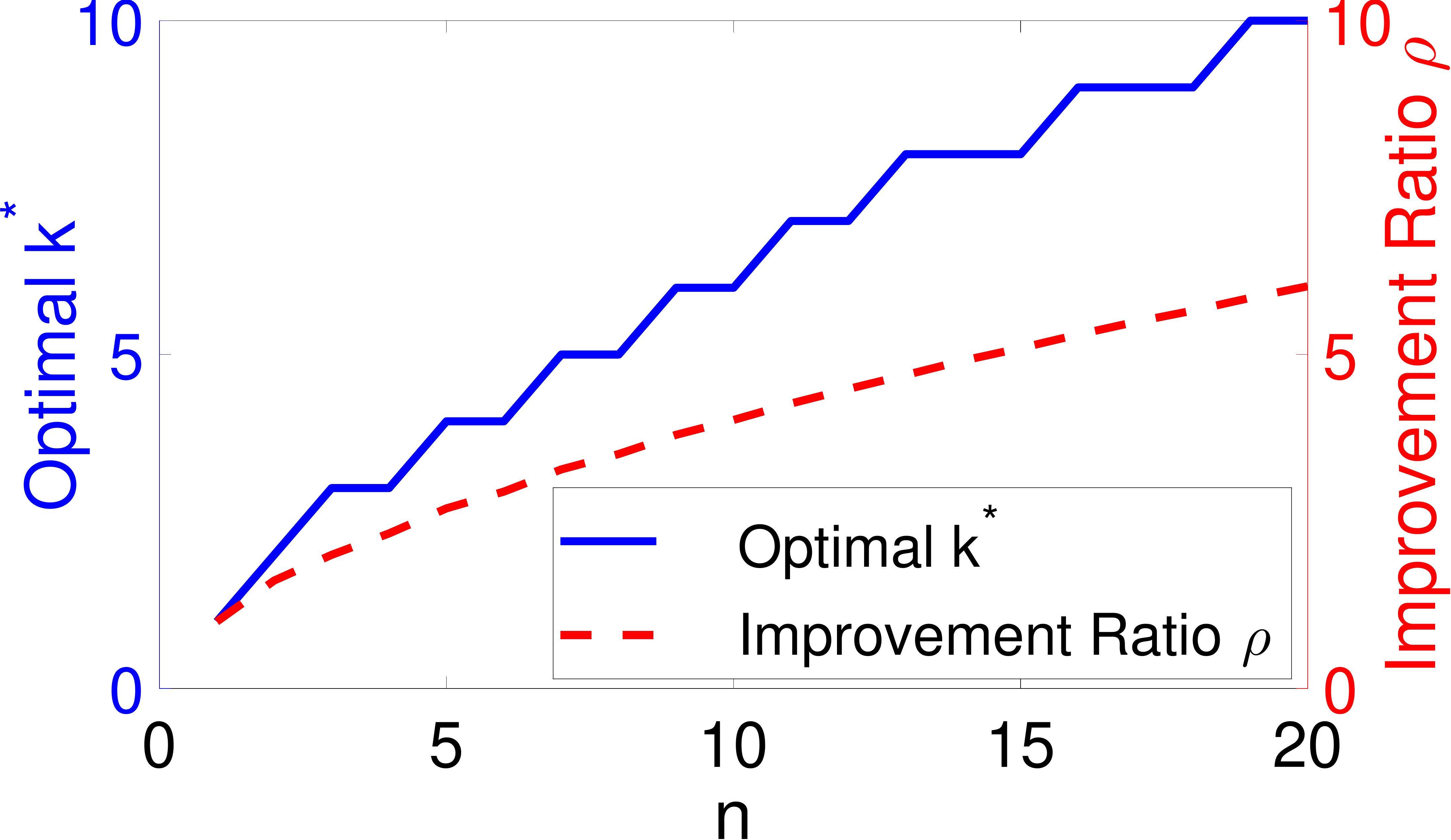}
		\caption{Impact of total number of servers $n$.}
		\label{subfig:n}
	\end{subfigure}
	\caption{Impact of the system parameters on the optimal $k^*$ and the corresponding improvement ratio.
	We consider the exponential distribution for the response time.
	In (a), we fix $\mu = 1, n = 20$; in (b), we fix $\lambda = 1, n = 20$; in (c), we fix $\lambda = 1, \mu = 10$.}
	\label{fig:impacts}
\end{figure*}

In addition, Fig.~\ref{fig:gamma} presents the simulation results for the response time with Gamma distribution, 
which can be used to model the response time in relay networks\cite{najm16}. 
Specifically, we consider a special class of the Gamma($r,\theta$) distribution that is the sum 
of $r$ \emph{i.i.d.} exponential random variables with mean $\theta$  (which is also called the Erlang distribution).
Then, the mean response time $1/\mu$ is equal to $r\theta$. We fix $r=5$ in the simulations.
Although we are unable to derive analytical results in this case, the observations are similar to that 
under the exponential and uniform distributions. 




Finally, we investigate the impact of the system parameters (the updating rate, the mean response time, 
and the total number of servers) on the optimal number of responses $k^*$ and the \emph{improvement ratio}, 
defined as $\rho \triangleq \mathds{E}[\Delta(1)]/\mathds{E}[\Delta(k^*)]$.
The improvement ratio captures the gain in the AoI reduction under the optimal scheme 
compared to a naive scheme of waiting for the first response only. 

Fig.~\ref{subfig:l} shows the impact of the updating rate $\lambda$. 
We observe that the optimal number of responses $k^*$ decreases as $\lambda$ increases. 
This is because when the updating rate is large, the AoI diversity at the servers is small. 
In this case, waiting for more responses is unlikely to receive a response with much fresher 
information. Therefore, 
the optimal scheme will simply be a naive scheme that waits only for the first response when the updating rate is relatively large (e.g., $\lambda=2$). 
%
%
Fig.~\ref{subfig:m} shows the impact of the mean response time $1/\mu$. 
We observe that the optimal number of responses $k^*$ increases as $\mu$ increases.
This is because when $\mu$ is large (i.e., when the mean response time is small),
the cost of waiting for additional responses becomes marginal and thus, waiting for more 
responses is likely to lead to the reception of a response with fresher information.
%
Fig.~\ref{subfig:n} shows the impact of the total number of servers $n$.
We observe that both the optimal number of responses $k^*$ and
the improvement ratio increase with $n$.
This is because an increased number of servers leads to more diversity gains
both in the AoI at the servers and in the response time.

\section{Conclusion}\label{sec:conclusion}
In this paper, we introduced a new Pull model for studying the AoI minimization problem under the replication schemes.
Assuming Poisson updating process and exponentially distributed response time, we derived the closed-form expression 
of the expected AoI at the user's side and provided a formula for computing the optimal solution.
Not only did our work reveal a novel tradeoff between different levels of information freshness and different response 
times across the servers, but we also demonstrated the power of waiting for more than one response in minimizing 
the expected AoI at the user's side.
An interesting direction for future work would be to develop dynamic replication schemes that do not require the knowledge
of the updating process and the response time distribution.


\bibliographystyle{IEEEtran}
\bibliography{AoI}

\end{document}